\documentclass[lettersize,journal]{IEEEtran}
\usepackage{amsmath,amsfonts}
\usepackage{amsthm, amssymb}
\usepackage[ruled,linesnumbered]{algorithm2e}
\usepackage{array}
\usepackage[caption=false,font=normalsize,labelfont=sf,textfont=sf]{subfig}
\usepackage{textcomp}
\usepackage{stfloats}
\usepackage{url}
\usepackage{bm}
\usepackage{verbatim}
\usepackage{graphicx}
\usepackage{cite}

\usepackage{amsmath,amssymb,amsfonts}
\usepackage{graphicx}
\usepackage{textcomp}
\usepackage{xcolor}
\usepackage{cite}  % 改善引用格式

\newtheorem{lemma}{Lemma}

\textheight 10.0in
\voffset -0.2in
\addtolength{\textwidth}{.19 in}

\hyphenation{op-tical net-works semi-conduc-tor IEEE-Xplore}
% updated with editorial comments 8/9/2021

% \textheight 10.0in
% \voffset -0.2in
% \addtolength{\textwidth}{.19 in}

\begin{document}

\title{Stable-MoE: Lyapunov-based Token Routing for Distributed Mixture-of-Experts Training over Edge Networks}

\author{Long  Shi,
Bingyan Ou,
Kang Wei,
Weihao Zhu,
Zhe Wang,
and Zhiyong Chen
% Wei Zhang
        % <-this % stops a space
\thanks{Long Shi, Bingyan Ou, and Weihao Zhu are with the School of Electronic and Optical Engineering, Nanjing University of Science and Technology, Nanjing 210094, China (e-mail:$\lbrace$longshi, oubingyan, zhuwh0813$\rbrace$@njust.edu.cn). % <-this % stops a space
Kang Wei is with the School of Cyber Science and Engineering, Southeast University, Nanjing 211189, China (e-mail: kang.wei@seu.edu.cn). 
Zhe Wang is with the School of Computer Science and Engineering, Nanjing University of Science and Technology, Nanjing 210094, China (e-mail: zwang@njust.edu.cn).
Zhiyong Chen is with the Cooperative Medianet Innovation Center and the Department of Electronic Engineering, Shanghai Jiao Tong University, Shanghai 200240, China (e-mail: zhiyongchen@sjtu.edu.cn). 
}}
% \thanks{Wei Zhang is with the School of Electrical Engineering and Telecommunications, The University of New South Wales, Sydney NSW 2052, Australia (email: w.zhang@unsw.edu.au).}
% \thanks{Manuscript received April 19, 2021; revised August 16, 2021.}}

% The paper headers
% \markboth{Journal of \LaTeX\ Class Files,~Vol.~14, No.~8, August~2021}%
% {Shell \MakeLowercase{\textit{et al.}}: A Sample Article Using IEEEtran.cls for IEEE Journals}

%\IEEEpubid{0000--0000/00\$00.00~\copyright~2021 IEEE}
% Remember, if you use this you must call \IEEEpubidadjcol in the second
% column for its text to clear the IEEEpubid mark.

\maketitle

\begin{abstract}
The sparse activation mechanism of mixture of experts (MoE) model empowers edge intelligence with enhanced training efficiency and reduced computational resource consumption. 
However, traditional token routing in distributed MoE training faces significant challenges in resource-constrained edge networks characterized by heterogeneous computing capabilities and stochastic token arrivals, which inevitably suffer from workload backlog, resource inefficiency, and performance degradation. 
To address this issue, we propose a novel Lyapunov-based token routing framework for distributed MoE training over resource-heterogeneous edge networks, termed Stable-MoE. 
Specifically, we formulate a stochastic optimization problem to maximize both system throughput and gating consistency via optimizing the token routing strategy and computational resource allocation, while ensuring long-term stability of both token and energy queues at the edge devices. 
Using the Lyapunov optimization, we transform the intractable long-term optimization problem into tractable per-slot subproblems by enabling online decision-making of token routing and computation frequency utilization without the knowledge of future system states. 
Experimental results on the SVHN and CIFAR-100 datasets demonstrate that Stable-MoE outperforms the baselines with at least $40\%$ and $5\%$ gains in system throughput and test accuracy, respectively. 
\end{abstract}

\begin{IEEEkeywords}
Distributed MoE, token routing, resource heterogeneous, Lyapunov optimization, edge network.
\end{IEEEkeywords}

\section{Introduction}
\IEEEPARstart{M}{ixture} of experts (MoE) models have recently gained significant attention as an effective architecture for large-scale neural networks, especially in distributed training systems where experts are trained or deployed on edge devices~\cite{Zhu2025Trustworthy,xue2025wdmoe,pang2026low, deng2025federated}. %,Zhang2025Communication
The main advantage of MoE lies in its sparse activation mechanism, which selectively activates only a subset of experts for each token input. 
This design significantly reduces computational load and enhances the scalability of the model~\cite{fedus2022switch}.
A central module of MoE is token routing, which determines the assignment of each token input to the most suitable experts. It is well established that an efficient routing strategy is crucial for optimizing the utilization of computational resources in MoE while simultaneously satisfying the demand of high-quality learning performance~\cite{shazeer2017outrageously}. 

In this context, various token routing strategies have been developed in centralized MoE architectures. For example, a pioneering work of MoE in~\cite{shazeer2017outrageously} employs a simple top-$K$ gating mechanism that routes each token to $K$ experts with the highest gating scores based on token feature. Later, Switch Transformer of~\cite{fedus2022switch} simplifies this approach by routing each token to the best expert only, while GLaM of~\cite{du2022glam} uses the top-$2$ routing strategy combined with auxiliary load balancing losses. GShard of~\cite{lepikhin2021gshard} also adopts the top-$2$ routing strategy but enforces expert capacity constraints by discarding tokens that exceed the per-expert token quota. Specifically, this quota refers to the maximum number of tokens that each expert is allowed to process per batch. 
% Moving forward, BASE Layers of~\cite{lewis2021base} reformulates routing as a balanced assignment problem to ensure that tokens are evenly distributed across experts without being discarded. 
Recent works such as Expert Choice Routing of~\cite{zhou2022mixture} address the issues of expert utilization by inverting the traditional routing paradigm: instead of tokens being routed to experts based on their compatibility, experts actively select tokens that align with specialized capabilities. 

% As the deployment of MoE becomes promising in edge networks for distributed training, traditional token routing strategies face new challenges specific to resource-heterogeneous environments~\cite{zhu2025moe,Zhu2025Randomized}. 
As MoE becomes promising in distributed edge intelligence, its implementation faces new challenges in terms of trustworthiness~\cite{zhu2025moe}, security~\cite{cao2022magsign, huang2024manipulating}, and privacy~\cite{wang2023security}. Beyond these, routing decisions of MoE over wireless networks are required to account for diverse resource constraints and random channel conditions. For instance, the work of~\cite{xue2025wdmoe} proposes WDMoE to mitigate delays in wireless networks that considers end-to-end latency into token routing. Beyond the latency-aware routing, recent works such as SMILE of~\cite{he2022smile} explore bi-level routing to adapt to heterogeneous network bandwidths, where tokens are first routed to a computing node and then to a specific GPU within that node. Later, the work of~\cite{song2025mixture} proposes channel-aware gating functions that incorporate wireless communication quality into token routing for edge computing scenarios. 
Despite the aforementioned routing strategies, a critical end-to-end throughput bottleneck manifests in practical deployments: under dynamic token arrival, token backlogs may occur at the edge devices with heterogeneous computational capacities, which in turn results in pronounced workload backlog and performance degradation. To date, how to address this token backlog issue remains an open and non-trivial research problem. 
% However, these edge-specific routing adaptations typically focus on individual aspects of the problem and lack a systematic framework for jointly optimizing token routing under the complex constraints of heterogeneous edge networks with stochastic workloads.

In this paper, we consider a distributed MoE training framework over resource-constrained edge networks under heterogeneous computational capabilities and stochastic token arrivals.
To enable efficient MoE training under such conditions, we propose a novel Lyapunov-based token routing framework for distributed MoE training over edge networks, termed Stable-MoE. 
In particular, we formulate a stochastic optimization problem that jointly maximizes system throughput and gating consistency while ensuring long-term stability of both token and energy queues. 
By leveraging the Lyapunov optimization theory, the long-term stochastic problem is decomposed into tractable per-slot subproblems, which are solved to obtain the optimal token routing matrix and computation frequency allocation at each time slot. 
This design not only guarantees queue stability but also balances workload across heterogeneous edge servers, thereby enhancing both model performance and training efficiency.

\section{System Model And Problem Formulation}
\subsection{System Model}
Fig. \ref{fig_1} illustrates a distributed MoE architecture over an edge network consisting of a router and $J$ heterogeneous edge servers, wherein the gating network is executed in the router and different experts are distributed over different edge servers~\cite{xue2025wdmoe}. 
The distributed MoE training over edge networks proceeds as follows:

(1) Upon receiving a set of tokens, the gating network activates a group of experts using a top-$K$ routing strategy, and routes each token to the selected experts for forward propagation computing. 

(2) Each edge server employs its locally deployed experts to process the routed subset of tokens, generates the forward outputs, and sends the outputs to the router. 

(3) The router aggregates the outputs from all the experts to produce the final forward result, and computes the training loss for the tokens. 

(4) The router broadcasts the training loss to the edge servers, after which the gating network and experts independently update model parameters via backward propagation and gradient descent. 

The system evolves in discrete time slots under resource constraints, where each edge server operates with heterogeneous computational capabilities and energy budgets. 
Let $\mathcal{T}=\lbrace0,1,\ldots, T\rbrace$ and $\tau$ denote the time index set and the duration of each time slot, respectively. 
We consider the one-to-one association between the edge servers and the experts. That is, expert $j$ is deployed at edge server $j$, $j\in \mathcal{J}=\lbrace1, \ldots, J\rbrace$. 
Suppose that tokens arrive at the router according to a Poisson process with an average arrival rate of $\lambda$ tokens per time slot. Let a binary variable $x_{i,j}(t)\in\lbrace0,1\rbrace$ represent the token routing indictor, where $x_{i,j}(t)=1$ if the token $i$ is routed to the expert $j$ at time slot $t$, otherwise $x_{i,j}(t)=0$. 
%In this letter, the router follows the classic top-$K$ routing rule, where the gating network only selects the $K$ most suitable experts. 
The top-$K$ routing strategy yields $\sum_{j\in \mathcal{J}}x_{i,j}(t)=K$, $\forall i\in \mathcal{S}(t),t\in \mathcal{T}$, where $\mathcal{S}(t)$ denotes the token index set at the time slot $t$. 

% Accordingly, the total number of tasks assigned to the $j$th expert at time slot $t$ is given by $d_j^\mathrm{allo}(t)=\sum_{i\in \mathcal{S}(t)}x_{i,j}(t)$.
% \begin{equation}
% \label{deqn_ex1}
% d_j^{allo}(t)=\sum_{i\in \mathcal{S}}x_{i,j}(t).
% \end{equation}
% We assume that the communication latency for dispatching tasks from the BS to the devices is negligible and that the wireless links are reliable (i.e., error-free). This allows us to focus on the computational and energy-related aspects of the system without introducing communication uncertainties.

At each time slot, the tokens from the router to each edge server are first stored in a token queue. Each edge server processes the tokens in its queue following a first-come-first-serve policy.
In this paper, a token is said to be processed if it has been routed to $K$ experts, computed by these experts, and aggregated by the router to produce the final output. 
Specifically, the number of tokens completed by edge server $j$ at time slot $t$, is given by 
\begin{equation}
\label{deqn_ex1}
d_j^{\mathrm{com}}(t)=\mathrm{min}\left\lbrace Q_j(t)+d_j^{\mathrm{rou}}(t),\left\lfloor {\tau}/{\tau_j^{\mathrm{com}}(t)}\right\rfloor \right\rbrace,
\end{equation}
where the floor function $\lfloor \cdot \rfloor$ rounds a number down to the nearest integer, $\tau_j^{\mathrm{com}}=c_i/f_j(t)$ is the computation time required by edge server $j$ to process a single token, $d_j^\mathrm{rou}(t)=\sum_{i\in \mathcal{S}(t)}x_{i,j}(t)$ is the total number of tokens routed to the expert $j$, $c_i$ is the number of CPU cycles for computing every single token, and $f_j(t)$ is the operating CPU frequency of edge server $j$. Note that the duration of each time slot should be significantly longer than the computation time of a single token, i.e., $0\leq \tau _j^\mathrm{com}(t)\leq \tau $. The CPU frequency is constrained by the edge server's maximum computational capability $0\leq f_j(t) \leq f^\mathrm{max}.$ 
Let $Q_j(t)$ be the backlog of the token queue at edge server $j$ over the time slot $t$. The token queuing dynamics at edge server $j$ is given by~\cite{neely2010stochastic}
\begin{equation}
\label{equ:Q_update}
Q_j(t+1)=\mathrm{max}\lbrace Q_j(t)+d_j^\mathrm{rou}(t)-d_j^\mathrm{com}(t), 0\rbrace.
\end{equation}
Note that $0\leq d_j^\mathrm{com}(t)\leq Q_j(t)+d_j^\mathrm{rou}(t).$

At each time slot, each edge server incurs energy consumption due to executing the routed computational tokens. Specifically, the computation energy consumed by edge server $j$ at the time slot $t$ is given by~\cite{mao2016dynamic, deng2020wireless}
\begin{equation}
\label{deqn_ex3}
E_j^\mathrm{com}(t)=\xi_j d_j^\mathrm{com}(t) \tau_j^\mathrm{com}(t) f_j^3(t),
\end{equation}
where $\xi_j$ is the effective switched capacitance determined by the edge server's chip architecture~\cite{burd1996processor}. In practice, the energy budget of each edge server is limited, i.e., $0\leq E_j^\mathrm{com}(t) \leq E_j^\mathrm{max} $. From \eqref{deqn_ex1} and \eqref{deqn_ex3}, the computational capabilities (i.e., the maximum number of tokens computed per time slot) of different edge servers are diverse. To manage and constrain the energy usage of edge servers, we introduce an energy queue that reflects the long-term evolution of energy consumption. The energy queuing dynamics at edge server $j$ is given by
\begin{equation}
\label{equ:E_update}
Z_j(t+1)=\mathrm{max}\lbrace Z_j(t)+E_j^\mathrm{com}(t)-E_j^\mathrm{avg}, 0\rbrace,
\end{equation}
where $Z_j(t)$  represents the energy backlog at the beginning of time slot $t$, and $E_j^\mathrm{avg}$ is the average energy budget available to edge server $j$.
% This queueing model captures the notion that when the actual energy consumption $E_j^\mathrm{com}(t)$ exceeds the sustainable budget $E_j^\mathrm{avg}$ , the energy queue accumulates, indicating an energy deficit. Conversely, when energy consumption falls below the budget, the energy queue decreases and may even reach zero. It is important to note that a zero energy queue is acceptable and signifies that the device is operating with sufficient energy; however, any surplus energy does not carry over for future compensation of energy deficits.

\begin{figure}[!t]
\centering
\includegraphics[width=3.2in]{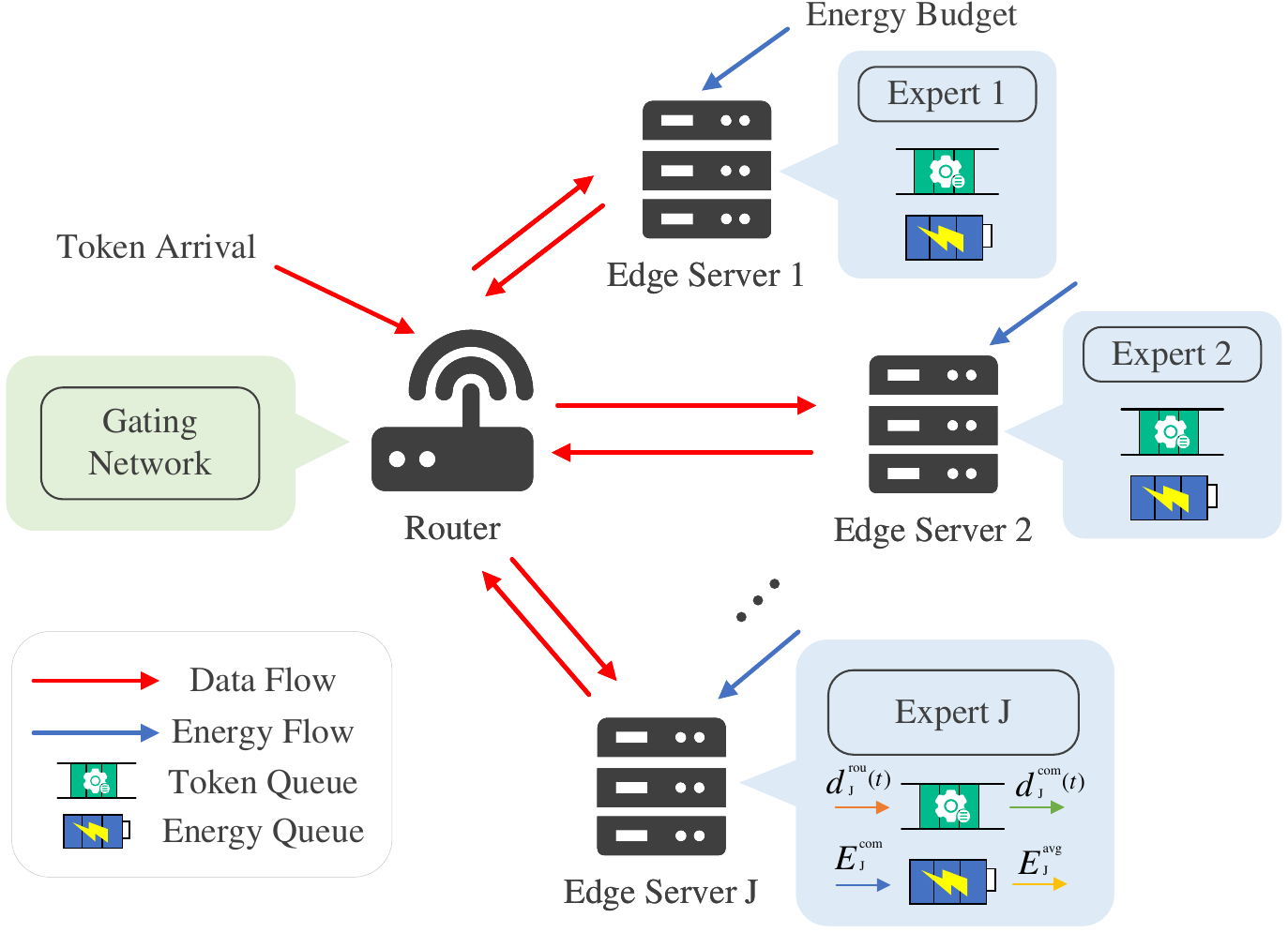}
% \vspace{-10pt}
\caption{The distributed MoE training over edge networks with a router and distributed edge servers.}
\label{fig_1}
\end{figure}

\subsection{Problem Formulation}
%In conventional MoE systems, tokens are typically assigned to experts with the $K$ highest gating scores~\cite{shazeer2017outrageously}. Its token assignment neglects the heterogeneity and real-time workloads of edge servers, which results in severe load imbalance across the edge network. In that case, some experts are frequently selected while others are almost idle.

Under the traditional top-$K$ routing strategy such as~\cite{shazeer2017outrageously}, the gating network scores and activates the experts solely based on token features, but neglects the impact of resource constraints and token backlogs of edge servers. This inevitably results in inefficient resource utilization and thereby degrades system throughput, where some servers are overloaded with excessive tokens while the others remain underutilized. 

Driven by this challenge, the goal of Stable-MoE is to maximize the long-term system throughput and gating consistency by enabling resource-aware token routing and efficient computation frequency utilization of the edge servers. 
In this paper, the system throughput is defined as the number of tokens processed per time slot.  
Let $\boldsymbol{X}(t)=[\boldsymbol{x}(t), \boldsymbol{f}(t)]$, wherein $\boldsymbol{x}(t)=\lbrace x_{i,j}(t)\rbrace _{\mathcal{S}(t) \times J}$ denotes the binary token routing matrix as previously defined, and $\boldsymbol{f}(t)=[f_1(t), f_2(t),\ldots,f_J(t)]$ denotes the computation frequencies of the edge servers. In this context, we formulate the optimization problem as 
% \begin{equation}
% \label{deqn_ex5}
% \begin{aligned}
% \mathbf{P}:&\max_{\boldsymbol{X}(t)}{\gamma(\overline{\boldsymbol{d}})+\mu\overline{G}}\\
% \text{s.t.}\,\,&\text{C}1: \sum_{j\in \mathcal{J}}x_{i,j}(t)=K, x_{i,j}(t)\in \lbrace 0,1\rbrace,\forall i\in \mathcal{S}(t),\\
% &\qquad j\in \mathcal{J},t\in \mathcal{T},\\
% &\text{C}2: 0\leq f_j(t)\leq f_{\mathrm{CPU}}^{\mathrm{max}},\forall j \in \mathcal{J},t\in \mathcal{T},\\
% &\text{C}3: 0\leq \tau _j^{\mathrm{com}}(t)\leq \tau,\forall j \in \mathcal{J},t\in \mathcal{T},\\
% &\text{C}4: 0\leq E_j^{\mathrm{com}}(t)\leq E_j^{\mathrm{max}},\forall j \in \mathcal{J},t\in \mathcal{T},\\
% &\text{C}5: \lim_{t \to \infty}\frac{\mathbb{E}\lbrace |Q_j(t)|\rbrace}{t}=0,\forall j \in \mathcal{J},t\in \mathcal{T},\\
% &\text{C}6: \lim_{t \to \infty}\frac{\mathbb{E}\lbrace |Z_j(t)|\rbrace}{t}=0,\forall j \in \mathcal{J},t\in \mathcal{T},
% \end{aligned}
% \end{equation}
\begin{align}
\mathbf{P}:&\max_{\boldsymbol{X}(t)}\enspace{\gamma(\overline{\boldsymbol{d}})+\mu\overline{G}} \label{deqn_ex5}\\
\text{s.t.}\,\,&\text{C}1: \sum_{j\in \mathcal{J}}x_{i,j}(t)=K, x_{i,j}(t)\in \lbrace 0,1\rbrace,\forall i\in \mathcal{S}(t),\nonumber\\
&\qquad j\in \mathcal{J},t\in \mathcal{T},\nonumber\\
&\text{C}2: 0\leq f_j(t)\leq f^{\mathrm{max}},\forall j \in \mathcal{J},t\in \mathcal{T},\nonumber\\
&\text{C}3: 0\leq \tau _j^{\mathrm{com}}(t)\leq \tau,\forall j \in \mathcal{J},t\in \mathcal{T},\nonumber\\
&\text{C}4: 0\leq E_j^{\mathrm{com}}(t)\leq E_j^{\mathrm{max}},\forall j \in \mathcal{J},t\in \mathcal{T},\nonumber\\
&\text{C}5: \lim_{t \to \infty}\mathbb{E}\lbrace |Q_j(t)|\rbrace /{t}=0,\forall j \in \mathcal{J},t\in \mathcal{T},\nonumber\\
&\text{C}6: \lim_{t \to \infty}\mathbb{E}\lbrace |Z_j(t)|\rbrace /{t}=0,\forall j \in \mathcal{J},t\in \mathcal{T},\nonumber
\end{align}
where $\gamma (\overline{\boldsymbol{d}})=\sum_{j\in \mathcal{J}}\mathrm{log}(1+\overline{d_j^\mathrm{com}})$ is the system utility that positively correlates with the system throughput~\cite{neely2010stochastic}. Note that $\gamma (\overline{\boldsymbol{d}})$ is a concave and increasing function that balances throughput and fairness by discouraging excessive load on individual edge servers.
$\overline{d_j^\mathrm{com}}=\lim_{T \to \infty}\frac{1}{T}\sum_{t=0}^{T-1}\mathbb{E}[d_j^\mathrm{com}(t)]$ is the time average of completed tokens at edge server $j$, and $\overline{\boldsymbol{d}}=[\overline{d_1^\mathrm{com}},\ldots,\overline{d_J^\mathrm{com}}]$. 
We optimize $G(t)=\sum_{i\in  \mathcal{S}(t)}\sum_{j\in \mathcal{J}}g_{i,j}(t) x_{i,j}(t)$ to maximize the consistency between traditional and optimized routing strategies, where $g_{i,j}(t)\in [0,1]$ is the gating scores of traditional MoE. 
Specifically, a larger $G(t)$ ensures that tokens are preferentially routed to experts with higher gating scores, thereby preserving model quality while dynamically adapting to the underlying resource constraints.
Additionally, $\overline{G}=\lim_{T \to \infty}\frac{1}{T}\sum_{t=0}^{T-1}\mathbb{E}[G(t)]$ is the time average of $G(t)$, and $\mu \geq 0$ serves as a weighting factor that controls the trade-off between system throughput and gating consistency. 

Notably, problem $\textbf{P}$ is a stochastic optimization problem that aims to jointly optimize long-term system throughput and routing consistency under stochastic token arrivals and energy constraints. 
% Specifically, the objective function consists of a logarithmic throughput utility term and a soft regularization term for gating consistency. 
Constraints ($\text{C}1$)-($\text{C}4$) have been defined in Section II-A.
Moreover, $\text{C}5$ and $\text{C}6$ ensure long-term stability of the token and energy queues, respectively~\cite{neely2010stochastic}.

\section{Stable-MoE}

It is rather challenging to solve $\textbf{P}$ offline due to coupled decisions, long-term objectives, and system uncertainty. 
To address this, Lyapunov optimization~\cite{neely2010stochastic} offers a principled framework for making sequential decisions in dynamic systems with time-varying states and unknown future information. 
% Unlike conventional optimization approaches that rely on full knowledge of arrival statistics and system distributions, Lyapunov optimization enables online control by observing only the current system state and making decisions slot-by-slot. It guarantees that all long-term queue stability constraints are satisfied while simultaneously driving the system utility toward its optimal region.
% Furthermore, the drift-plus-penalty technique derived from Lyapunov optimization balances two critical aspects: the minimization of queue backlogs (representing system stability) and the maximization of instantaneous utility (representing performance). This makes it particularly well-suited to wireless distributed MoE systems, where edge devices operate under limited energy budgets, fluctuating workloads, and the need to coordinate expert selection decisions.
% Therefore, in the following sections, we leverage Lyapunov optimization to design a low-complexity, online scheduling policy that dynamically allocates tasks and adjusts device computation resources, ensuring both long-term performance and queue stability without requiring future knowledge or distribution assumptions.
In this paper, we use the Lyapunov method to transform the intractable long-term optimization problem into tractable subproblems per time slot. 

For the queuing dynamics $Q_j(t)$ and $Z_j(t)$, we define the Lyapunov function as $L(t)=\frac{1}{2}\lbrace\sum_{j \in \mathcal{J}}(Q_j(t)^2+Z_j(t)^2)\rbrace$. Let $\boldsymbol{\Theta}(t)=\lbrace Q_j(t), Z_j(t),\forall j \in \mathcal{J} \rbrace$ denote the system state at time slot $t$. We define the conditional Lyapunov drift for time slot $t$ as 
% \begin{equation}
% \label{deqn_ex1}
$\Delta(t)=\mathbb{E}[L(t+1)-L(t)|\boldsymbol{\Theta}(t)]$,
% \end{equation}
which depends on the scheduling policy in response to dynamically arriving tokens and the current system state $\boldsymbol{\Theta}(t)$. Minimizing $\Delta(t)$ helps stabilize the virtual queues $\boldsymbol{\Theta}(t)$, thereby promoting satisfaction of the long-term stability constraints $\text{C}5$ and $\text{C}6$~\cite{neely2010stochastic}. However, minimizing $\Delta(t)$ alone may result in low system throughput. 
Instead, we minimize the Lyapunov drift-plus-penalty function $\Delta_V(t)=\Delta(t)-V\mathbb{E}[\gamma(\overline{\boldsymbol{d}})+\mu\overline{G}|\boldsymbol{\Theta}(t)]$, where $V>0$ is a tunable parameter that controls the trade-off between optimization objective and queue stability. 
% The instantaneous utility function at time slot $t$ is defined as $U(t)=\sum_{j\in \mathcal{J}}\mathrm{log}(1+d_j^\mathrm{com}(t))+\mu\sum_{i\in \mathcal{S}(t)}\sum_{j\in \mathcal{J}}g_{i,j}(t) x_{i,j}(t)$, which captures both the throughput-fairness objective and the soft constraint on gating consistency. By minimizing an upper bound of the drift-plus-penalty expression in each time slot, we can derive a control policy that guarantees long-term queue stability while achieving near-optimal scheduling decisions. 

To address the challenge of directly minimizing $\Delta_V(t)$, we follow the Lyapunov optimization and characterize an upper bound of $\Delta_V(t)$ as the following lemma:
\begin{lemma}
For any queue backlogs and actions, $\Delta_V(t)$ is upper bounded by 
\begin{align}
&\Delta_V(t)\leq B-V\mathbb{E}[\sum_{j\in \mathcal{J}}\mathrm{log}(1+d_j^\mathrm{com}(t))+\mu \sum_{i \in \mathcal{S}(t)}\sum_{j\in \mathcal{J}}\nonumber\\
&\quad g_{i,j}(t) x_{i,j}(t)|\boldsymbol{\Theta}(t)]+\sum_{j\in \mathcal{J}}Q_j(t)\mathbb{E}[d_j^\mathrm{rou}(t)-d_j^\mathrm{com}(t)|\boldsymbol{\Theta}(t)]\nonumber\\
&\quad +\sum_{j\in \mathcal{J}}Z_j(t)\mathbb{E}[E_j^\mathrm{com}(t)-E_j^\mathrm{avg}|\boldsymbol{\Theta}(t)], \label{eqn-6}
\end{align}
where
\begin{equation}
\label{deqn_ex7}
B=\frac{1}{2}\sum_{j\in \mathcal{J}}[(\lambda+\lambda^2)+(D^\mathrm{max}_j)^2+(E^\mathrm{max}_j)^2+(E^\mathrm{avg}_j)^2].
\end{equation}
\end{lemma}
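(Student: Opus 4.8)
The plan is to follow the standard Lyapunov drift-plus-penalty argument: bound the one-slot change of $L(t)$ by squaring the queue recursions, absorb the resulting second-moment terms into the constant $B$, and then subtract the penalty. The drift part is essentially mechanical, so I would dispatch it first and reserve care for the penalty replacement.

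First I would handle the drift $\Delta(t)$. Applying the elementary bound $(\max\{a,0\})^2 \le a^2$ to the recursions \eqref{equ:Q_update} and \eqref{equ:E_update} gives, for each server $j$,
\begin{align}
Q_j(t+1)^2 &\le (Q_j(t) + d_j^{\mathrm{rou}}(t) - d_j^{\mathrm{com}}(t))^2, \nonumber\\
Z_j(t+1)^2 &\le (Z_j(t) + E_j^{\mathrm{com}}(t) - E_j^{\mathrm{avg}})^2. \nonumber
\end{align}
Expanding the right-hand sides, the $Q_j(t)^2$ and $Z_j(t)^2$ terms cancel against their counterparts in $L(t)$, and the factor $2$ on each cross term combines with the $\tfrac12$ in $L(t)$ to leave the linear terms $Q_j(t)(d_j^{\mathrm{rou}}(t) - d_j^{\mathrm{com}}(t))$ and $Z_j(t)(E_j^{\mathrm{com}}(t) - E_j^{\mathrm{avg}})$ together with the quadratic residuals $(d_j^{\mathrm{rou}}(t) - d_j^{\mathrm{com}}(t))^2$ and $(E_j^{\mathrm{com}}(t) - E_j^{\mathrm{avg}})^2$. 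Summing over $j$, halving, and taking the conditional expectation given $\boldsymbol{\Theta}(t)$ reproduces exactly the two linear drift sums appearing in \eqref{eqn-6}.

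Next I would collapse the quadratic residuals into $B$. Since $d_j^{\mathrm{rou}}(t), d_j^{\mathrm{com}}(t) \ge 0$, the inequality $(a-b)^2 \le a^2 + b^2$ gives $(d_j^{\mathrm{rou}}(t) - d_j^{\mathrm{com}}(t))^2 \le (d_j^{\mathrm{rou}}(t))^2 + (d_j^{\mathrm{com}}(t))^2$. Here $d_j^{\mathrm{rou}}(t) \le |\mathcal{S}(t)|$ because each binary indicator contributes at most one, and since arrivals are Poisson with rate $\lambda$ the second moment satisfies $\mathbb{E}[|\mathcal{S}(t)|^2] = \lambda + \lambda^2$; meanwhile $d_j^{\mathrm{com}}(t)$ is capped by the per-slot compute limit $D_j^{\max}$. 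The same step applied to the energy residual, together with $E_j^{\mathrm{com}}(t) \le E_j^{\max}$ from $\text{C}4$, yields $(E_j^{\mathrm{com}}(t) - E_j^{\mathrm{avg}})^2 \le (E_j^{\max})^2 + (E_j^{\mathrm{avg}})^2$. Summing these four contributions over $j$ and halving produces precisely the constant $B$ in \eqref{deqn_ex7}.

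Finally, subtracting $V\mathbb{E}[\gamma(\overline{\boldsymbol{d}}) + \mu\overline{G}\mid\boldsymbol{\Theta}(t)]$ and passing to the instantaneous surrogate $\sum_{j} \log(1 + d_j^{\mathrm{com}}(t)) + \mu G(t)$ completes the bound. I expect this last replacement to be the main obstacle: the gating term $\mu\overline{G}$ is linear in $x_{i,j}(t)$ and transfers to $\mu G(t)$ without difficulty, but the throughput utility $\gamma(\cdot) = \sum_j \log(1+\cdot)$ is a \emph{concave} function of the long-run averages $\overline{\boldsymbol{d}}$, so moving from $\gamma(\overline{\boldsymbol{d}})$ to the per-slot $\sum_j \log(1 + d_j^{\mathrm{com}}(t))$ cannot be done by a direct algebraic identity. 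It must instead be justified through Jensen's inequality on the concave $\log(1+\cdot)$ within the drift-plus-penalty framework, so that per-slot maximization of the instantaneous utility serves as a conservative surrogate for the utility of the time average. Once this step is in place, the remaining terms line up verbatim with \eqref{eqn-6}, and the lemma follows.
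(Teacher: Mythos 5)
Your proposal is correct and follows essentially the same route as the paper's proof: square the queue recursions, cancel the $Q_j(t)^2$ and $Z_j(t)^2$ terms, absorb the quadratic residuals into $B$ via $(a-b)^2\le a^2+b^2$, the Poisson second moment $\lambda+\lambda^2$, and the caps $D_j^{\max}$ and $E_j^{\max}$, then subtract the penalty. The only place you diverge is that you are more careful than the paper about replacing $\gamma(\overline{\boldsymbol{d}})$ with the per-slot $\sum_{j}\log(1+d_j^{\mathrm{com}}(t))$ --- the paper simply subtracts the instantaneous penalty from both sides without comment, implicitly treating $\Delta_V(t)$ as defined with the per-slot objective, whereas you correctly flag that relating the utility of the time average to the time average of the utility requires the Jensen/auxiliary-variable argument from the standard drift-plus-penalty framework.
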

\begin{proof}
From (2), we have
\begin{align}
\label{equ:lemma1}
\begin{aligned}
&Q_j(t+1)^2=(\mathrm{max}\lbrace Q_j(t)+d_j^\mathrm{rou}(t)-d_j^\mathrm{com}(t),0\rbrace)^2 \\
&\leq [Q_j(t)+d_j^\mathrm{rou}(t)-d_j^\mathrm{com}(t)]^2 \\
&= Q_j(t)^2+2Q_j(t)[d_j^\mathrm{rou}(t)-d_j^\mathrm{com}(t)]+[d_j^\mathrm{rou}(t)-d_j^\mathrm{com}(t)]^2.
\end{aligned}
\end{align}
By moving $Q_j(t)^2$ to the left-hand side of \eqref{equ:lemma1}, dividing both sides by $2$, and summing up the inequalities $\forall j \in \mathcal{J}$, we have
\begin{align}
\label{equ:lemma2}
&\frac{1}{2}\sum_{j\in \mathcal{J}}[Q_j(t+1)^2-Q_j(t)^2]\leq \sum_{j\in \mathcal{J}}Q_j(t)[d_j^\mathrm{rou}(t)-d_j^\mathrm{com}(t)] \nonumber\\
&+\frac{1}{2}\sum_{j\in \mathcal{J}}[d_j^\mathrm{rou}(t)-d_j^\mathrm{com}(t)]^2.
\end{align}
Similarly, for the energy queue in \eqref{equ:E_update}, we have
\begin{align}
\label{equ:lemma3}
&\frac{1}{2}\sum_{j\in \mathcal{J}}[Z_j(t+1)^2-Z_j(t)^2] \leq \sum_{j\in \mathcal{J}}Z_j(t)[E_j^\mathrm{com}(t)-E_j^\mathrm{avg}] \nonumber\\
&+\frac{1}{2}\sum_{j\in \mathcal{J}}[E_j^\mathrm{com}(t)-E_j^\mathrm{avg}]^2.
\end{align}
Summing up \eqref{equ:lemma2} and \eqref{equ:lemma3}, taking the conditional expectation, we have $\Delta(t) \leq 
\sum_{j\in \mathcal{J}}Q_j(t)\mathbb{E}[d_j^\mathrm{rou}(t)-d_j^\mathrm{com}(t)|\boldsymbol{\Theta}(t)]
+\sum_{j\in \mathcal{J}}Z_j(t)\mathbb{E}[E_j^\mathrm{com}(t)-E_j^\mathrm{avg}|\boldsymbol{\Theta}(t)]
+\sum_{j\in \mathcal{J}}\frac{1}{2}\mathbb{E}[(d_j^\mathrm{rou}(t)-d_j^\mathrm{com}(t))^2+(E_j^\mathrm{com}(t)-E_j^\mathrm{avg})^2|\boldsymbol{\Theta}(t)].$
% \begin{equation}
% \label{equ:lemma4}
% \begin{aligned}
% &\Delta(t) \leq 
% \sum_{j\in \mathcal{J}}Q_j(t)\mathbb{E}[d_j^\mathrm{rou}(t)-d_j^\mathrm{com}(t)|\boldsymbol{\Theta}(t)]\\
% &+\sum_{j\in \mathcal{J}}Z_j(t)\mathbb{E}[E_j^\mathrm{com}(t)-E_j^\mathrm{avg}|\boldsymbol{\Theta}(t)]\\
% &+\sum_{j\in \mathcal{J}}\frac{1}{2}\mathbb{E}[(d_j^\mathrm{rou}(t)-d_j^\mathrm{com}(t))^2+(E_j^\mathrm{com}(t)-E_j^\mathrm{avg})^2|\boldsymbol{\Theta}(t)].\\
% \end{aligned}
% \end{equation}
Then, subtracting $V\mathbb{E}[\sum_{j=1}^J \mathrm{log}(1+d_j^\mathrm{com}(t))+\mu \sum_{i=1}^{|\mathcal{S}(t)|}\sum_{j=1}^Jg_{i,j}(t) x_{i,j}(t)|\boldsymbol{\Theta}(t)]$ from both sides yields 
% \begin{equation}
% \label{equ:lemma4}
% \begin{aligned}
% &\Delta_V(t)\leq \sum_{j\in \mathcal{J}}\frac{1}{2}\mathbb{E}[(d_j^\mathrm{allo}(t)-d_j^\mathrm{com}(t))^2\\
% &+(E_j^\mathrm{com}(t)-E_j^\mathrm{avg}(t))^2|\boldsymbol{\Theta}(t)]-V\mathbb{E}[\sum_{j\in \mathcal{J}}\mathrm{log}(1+d_j^\mathrm{com}(t))\\
% &+\mu \sum_{i \in \mathcal{S}(t)}\sum_{j\in \mathcal{J}}g_{i,j}(t) x_{i,j}(t)|\boldsymbol{\Theta}(t)]\\
% &+\sum_{j\in \mathcal{J}}Q_j(t)\mathbb{E}[d_j^\mathrm{allo}(t)-d_j^\mathrm{com}(t)|\boldsymbol{\Theta}(t)]\\
% &+\sum_{j\in \mathcal{J}}Z_j(t)\mathbb{E}[E_j^\mathrm{com}(t)-E_j^\mathrm{avg}|\boldsymbol{\Theta}(t)].
% \end{aligned}
% \end{equation}
\begin{align}\
&\Delta_V(t)\leq \sum_{j\in \mathcal{J}}\frac{1}{2}\mathbb{E}[(d_j^\mathrm{rou}(t)-d_j^\mathrm{com}(t))^2 \nonumber\\
&+(E_j^\mathrm{com}(t)-E_j^\mathrm{avg})^2|\boldsymbol{\Theta}(t)]-V\mathbb{E}[\sum_{j\in \mathcal{J}}\mathrm{log}(1+d_j^\mathrm{com}(t)) \nonumber\\
&+\mu \sum_{i \in \mathcal{S}(t)}\sum_{j\in \mathcal{J}}g_{i,j}(t) x_{i,j}(t)|\boldsymbol{\Theta}(t)] \nonumber\\
&+\sum_{j\in \mathcal{J}}Q_j(t)\mathbb{E}[d_j^\mathrm{rou}(t)-d_j^\mathrm{com}(t)|\boldsymbol{\Theta}(t)] \nonumber\\
&+\sum_{j\in \mathcal{J}}Z_j(t)\mathbb{E}[E_j^\mathrm{com}(t)-E_j^\mathrm{avg}|\boldsymbol{\Theta}(t)]. \label{equ:lemma5}
\end{align}

Note that $d_j^{\mathrm{rou}}(t), d_j^{\mathrm{com}}(t),$ and $E_j^{\mathrm{com}}(t)$ are non-negative. Thus, we have $\frac{1}{2}\mathbb{E}[(d_j^\mathrm{rou}(t)-d_j^\mathrm{com}(t))^2|\boldsymbol{\Theta}(t)] \leq \frac{1}{2}\mathbb{E}[d_j^\mathrm{rou}(t)^2|\boldsymbol{\Theta}(t)]+\frac{1}{2}\mathbb{E}[d_j^\mathrm{com}(t)^2|\boldsymbol{\Theta}(t)].$ Since $0\leq d_j^\mathrm{rou}(t) \leq |\mathcal{S}(t)|$, it follows that $\mathbb{E}[d_j^\mathrm{rou}(t)^2|\boldsymbol{\Theta}(t)] \leq \mathbb{E}[\mathcal{S}(t)^2]=\lambda+\lambda^2$. Moreover, let $D_j^\mathrm{max}=\lfloor \frac{\tau f_j^\mathrm{max}}{c_i}\rfloor$ be the maximum processing capacity of edge server $j$, we obtain $\mathbb{E}[d_j^{\mathrm{com}}(t)^2|\boldsymbol{\Theta}(t)] \leq (D_j^\mathrm{max})^2$. Similarly, $\mathbb{E}[(E_j^{\mathrm{com}}(t)-E_j^{\mathrm{avg}})^2|\boldsymbol{\Theta}(t)] \leq (E_j^{\mathrm{max}})^2+(E_j^{\mathrm{avg}})^2$. Thus, we arrive at \eqref{eqn-6}.
% Without loss of generality, we assume $|\mathcal{S}(t)|$ is upper bounded by $S^\mathrm{max}=\lambda+3\sqrt{\lambda}$. Let $D_j^\mathrm{max}=\lfloor \frac{\tau f_j^\mathrm{max}}{c_i}\rfloor$ denote the maximum processing capacity of edge server $j$. Finally, we have \eqref{eqn-2}.
\end{proof}
% Since directly minimizing $\Delta_V(t)$ is intractable, Lyapunov optimization framework seeks to minimize an upper bound of $\Delta_V(t)$ in each time slot to derive a tractable control policy. 
Note that the upper bound of $\Delta_V(t)$ involves conditional expectations. To minimize the upper bound of $\Delta_V(t)$, we employ opportunistic expectation minimization of~\cite{neely2010stochastic} to generate the optimal policy. Specifically, in each time slot, we generate an implementable scheduling policy to minimize
\begin{align}\label{deqn_ex12}
&B-V [ \sum_{j\in \mathcal{J}}\mathrm{log}(1+d_j^\mathrm{com}(t))+\mu \sum_{i \in \mathcal{S}(t)}\sum_{j\in \mathcal{J}}g_{i,j}(t) x_{i,j}(t)] \nonumber\\
&+\sum_{j\in \mathcal{J}}Q_j(t)[d_j^\mathrm{rou}(t)-d_j^\mathrm{com}(t)]+\sum_{j\in \mathcal{J}}Z_j(t)[E_j^\mathrm{com}(t) \nonumber\\
&-E_j^\mathrm{avg}],
\end{align}
where \eqref{deqn_ex12} is obtained from the upper bound of $\Delta_V(t)$ by ignoring the expectations. Given the above formulations, we reformulate \textbf{P} into the per-slot tractable optimization problem:
% 原格式
% \begin{align}\label{equ:p1}
% \mathbf{P1}: \max_{\mathbf{X}(t)}&V [\sum_{j\in \mathcal{J}}\mathrm{log}(1+d_j^\mathrm{com}(t))+\mu \sum_{i\in \mathcal{S}(t)}\sum_{j\in \mathcal{J}}g_{i,j}(t) \nonumber\\
% &x_{i,j}(t)]-\lbrace \sum_{j\in \mathcal{J}}Q_j(t)[d_j^\mathrm{rou}(t)-d_j^\mathrm{com}(t)]+\sum_{j\in \mathcal{J}} \nonumber\\
% &Z_j(t)[E_j^\mathrm{com}(t)-E_j^\mathrm{avg}]\rbrace \nonumber\\
% \text{s.t.}\quad &\text{C}1\sim \text{C}4. 
% \end{align}

\vspace{-15pt}
% 漂移惩罚函数形式
\begin{align}\label{equ:p1}
\mathbf{P1}: \max_{\mathbf{X}(t)}&\enspace V \Gamma_\text{u}(\mathbf{X}(t))-\Gamma_\text{p}(\mathbf{X}(t)) \nonumber\\
\text{s.t.}\quad &\text{C}1\sim \text{C}4, 
\end{align}
% \vspace{-10pt}
where 
\begin{align}\label{equ:util}
\Gamma_\text{u}(\mathbf{X}(t))= \sum_{j\in \mathcal{J}}\mathrm{log}(1+d_j^\mathrm{com}(t))\nonumber\\+\mu \sum_{i\in \mathcal{S}(t)}\sum_{j\in \mathcal{J}}g_{i,j}(t)x_{i,j}(t),
\end{align}
\vspace{-15pt}
\begin{align}\label{equ:pena}
\Gamma_\text{p}(\mathbf{X}(t))= \sum_{j\in \mathcal{J}}Q_j(t)[d_j^\mathrm{rou}(t)-d_j^\mathrm{com}(t)]\nonumber\\+\sum_{j\in \mathcal{J}}Z_j(t)[E_j^\mathrm{com}(t)-E_j^\mathrm{avg}].
\end{align}
As discussed above, the per-slot optimization problem $\textbf{P1}$ involves both integer decision variables and a nonlinear objective function. Due to the complex coupling among variables, we employ a tunable solver-based approach based on mixed-integer programming methods (e.g., branch-and-bound) to optimize the token routing and computation frequency allocation at each time slot. The pseudocode of Stable-MoE is summarized in \textbf{Algorithm \ref{algorithm}}.

% \begin{algorithm}
% \caption{Adaptive Resource Allocation for MoE}\label{algorithm}
% Initialize: $ Q_j(t) = 0$, $ Z_j(t) = 0$\\
% \textbf{At the BS side:} \\
% \hspace{0.5em} Receive task batch $\mathcal{S}(t)$\;
% \hspace{0.5em} Observe $ Q_j(t), Z_j(t)$\;
% \hspace{0.5em} Compute $ g_{i,j}(t)$ based on gating network output\;
% \hspace{0.5em} Find $ x_{i,j}(t)$ and $f_j(t)$ by solving $(14)$\;
% \hspace{0.5em} Dispatch expert assignments to devices. \\
% \textbf{At the $j$-th device:} \\
% \hspace{0.5em} Receive assigned tasks $d^{\mathrm{allo}}_{j}(t)$\;
% \hspace{0.5em} Compute $d^{\mathrm{com}}_{j}(t)$ from $f_j(t)$ and $\tau^{\mathrm{com}}_{j}(t)$\;
% \hspace{0.5em} Update $(2)$\;
% \hspace{0.5em} Compute energy $E^{\mathrm{com}}_{j}(t)$\;
% \hspace{0.5em} Update $(4)$\;
% \hspace{0.5em} Execute expert inference and return output. \\
% \end{algorithm}

\begin{algorithm}[h]
\SetAlgoLined
% \KwData{The parameter $V$, the weighting factor $\mu$, the parameter $K$, the duration of time slot $\tau$, the maximum computation frequency $f_{\text{CPU}}^{\text{max}}$, the maximum energy budget $E_{j}^{\text{max}}$, the average energy budget $E_{j}^{\text{avg}}$ }
\KwData{$V$, $\mu$, $K$, $\tau$, $f^{\text{max}}$, $E_{j}^{\text{max}}$, $E_{j}^{\text{avg}}$ }
% \KwResult{The sub-adjacency matrix $\{\bm{A}^{m}, \forall m\in M\}$}
Initialization: $ Q_j(0) = 0$, $ Z_j(0) = 0$, $t = 0$\;
\While{$t < T$}{
Receive token batch $\mathcal{S}(t)$\;
Observe $ Q_j(t), Z_j(t)$\;
Calculate $ g_{i,j}(t)$\;
Find $ x_{i,j}(t)$ and $f_j(t)$ by solving \eqref{equ:p1}\;
\For{each edge server $j \in \mathcal{J}$}{
Receive tokens $d^{\mathrm{rou}}_{j}(t)$\;
Calculate $d^{\mathrm{com}}_{j}(t)$ by $f_j(t)$ and $\tau^{\mathrm{com}}_{j}(t)$, and then update token queuing dynamics as \eqref{equ:Q_update}\;
Calculate $E^{\mathrm{com}}_{j}(t)$, and then update the energy queuing dynamics as \eqref{equ:E_update}\;
Generate and return forward outputs\;
}
Produce the final output and the training loss\;
Update parameters of gating network and experts\;
$t = t+1$\;
}
\caption{Stable-MoE}\label{algorithm}
\end{algorithm}
\vspace{-10pt}

\section{Experimental Results}
We evaluate the performance of Stable-MoE in resource-constrained edge networks. The experiments are conducted on two benchmark datasets: SVHN~\cite{netzer2011reading} and CIFAR-100~\cite{krizhevsky2009learning}, both consisting of 32$\times$32 colored digit images. 
The gating network is feedforward layers, and the expert networks are convolutional layers. 
We generate the tokens at time slot $t$ by sampling a batch of images from the dataset according to a Poisson distribution with parameter $\lambda$. This sampling strategy ensures balanced label distribution while maintaining the stochastic characteristics of real-time token arrivals. Consider that $J=10$ edge servers with $K=3$, time slot duration $\tau =1\mathrm{s}$, average token arrival rate $\lambda =390 \mathrm{tokens/slot}$, effective switched capacitance $\xi_j=2\times10^{-27}$, maximum CPU frequency $f^{\mathrm{max}}=3 \mathrm{GHz}$, computational complexity $c_i=10^7 \mathrm{cycles/token}$. 
We set ${E_{j}^{\mathrm{max}}}\in [3\mathrm{J}, 15\mathrm{J}]$ and ${E_{j}^{\mathrm{avg}}}\in [1.5\mathrm{J}, 9.5\mathrm{J}]$, $\forall j\in \mathcal{J}$. These non-uniform energy budgets intrinsically constrain $f_j(t)$ and correspondingly limit $d_j^{\text{com}}(t)$ of each expert, thereby leading to the heterogeneous computational capabilities of edge servers.

For comparison purpose, we consider the following baselines: (a) random routing (see the line labeled with ``Strategy A''): tokens are randomly routed among all experts with uniform probability; (b) traditional top-$K$ routing (see the line labeled with ``Strategy B''): tokens are routed to the experts with the $K$ highest gating scores; (c) queue-aware routing (see the line labeled with ``Strategy C''): tokens are routed to the experts with the smallest token queue backlog at each time slot; (d) energy-aware routing (see the line labeled with ``Strategy D''): tokens are routed to the experts with the smallest energy queue backlog. 

\begin{figure}[!t]
\centering
\includegraphics[width=3.5in]{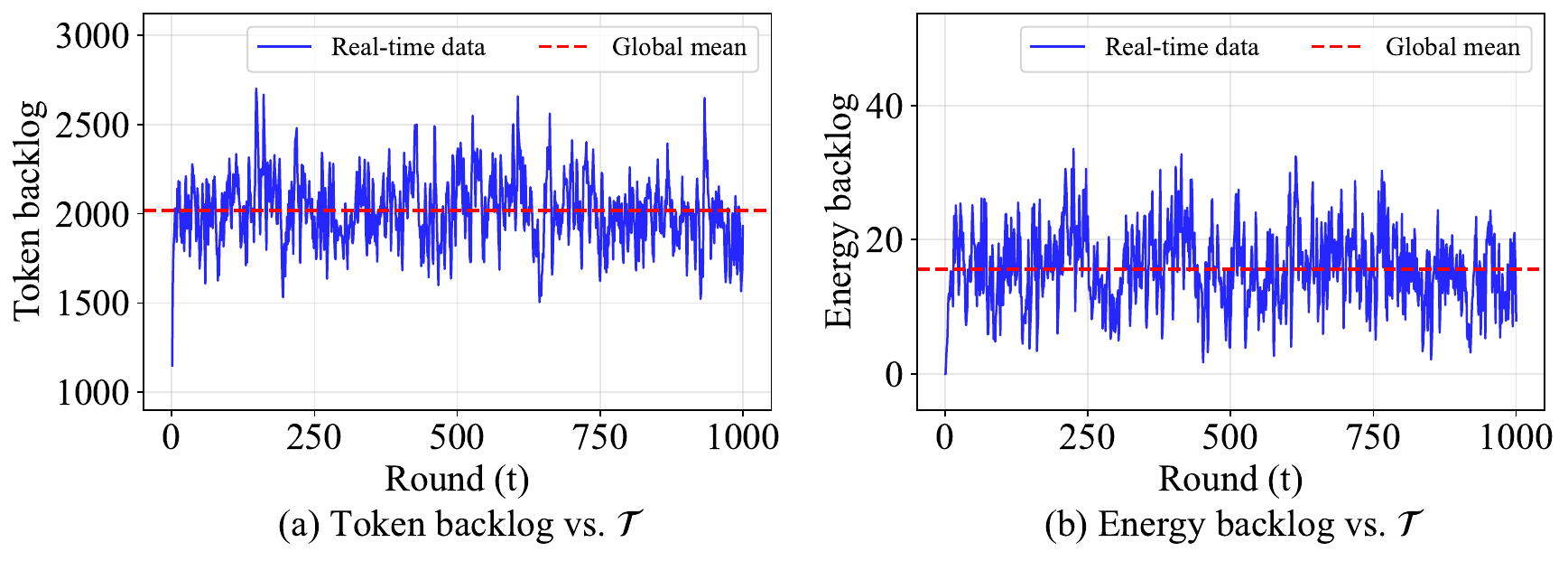}
% \vspace{-20pt}
\caption{Queue backlogs of Stable-MoE versus $\mathcal{T}$. The blue line represents the instantaneous backlog observed in each round, whereas the red dashed lines indicate the global mean values obtained by averaging over all rounds.}
\label{fig:queue}
\end{figure}
% \vspace{-5pt}
\begin{figure}[!t]
\centering
\includegraphics[width=3.5in]{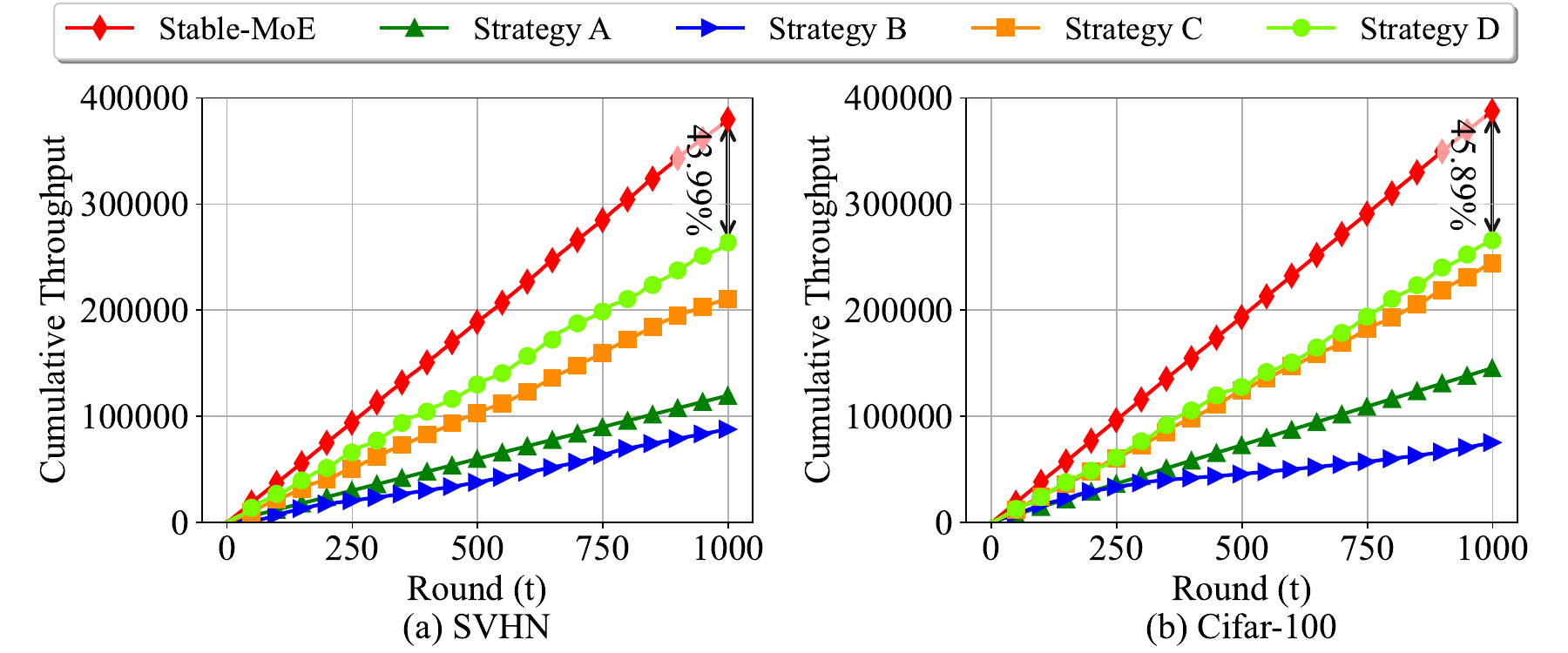}
% \vspace{-2pt}
\caption{Throughput comparison between Stable-MoE and Strategies A-D.}
\label{fig:throughput}
\end{figure}
% \vspace{-12pt}
Fig. \ref{fig:queue} shows the temporal evolution of token and energy queue backlogs versus $\mathcal{T}$ under Stable-MoE. It is observed that both token and energy queues exhibit initial growth followed by stabilization around steady-state values. These results further demonstrate the queue stability of Stable-MoE under stochastic token arrivals and heterogeneous energy constraints.

Fig. \ref{fig:throughput} compares the cumulative system throughput over time of Stable-MoE against four baselines under SVHN and CIFAR-100. It is observed that, Stable-MoE achieves significantly higher cumulative throughput than Strategies A-D, and the performance gain continuously widens over time. 
% For example, the performance gain of Stable-MoE over Strategy D is around $40\%$. 
This is due to the fact that Strategy D focuses solely on minimizing energy queues without considering the computational workload. As a result, edge servers with smaller energy queues may still be overloaded.  

\begin{figure}[!t]
\centering
\includegraphics[width=3.5in]{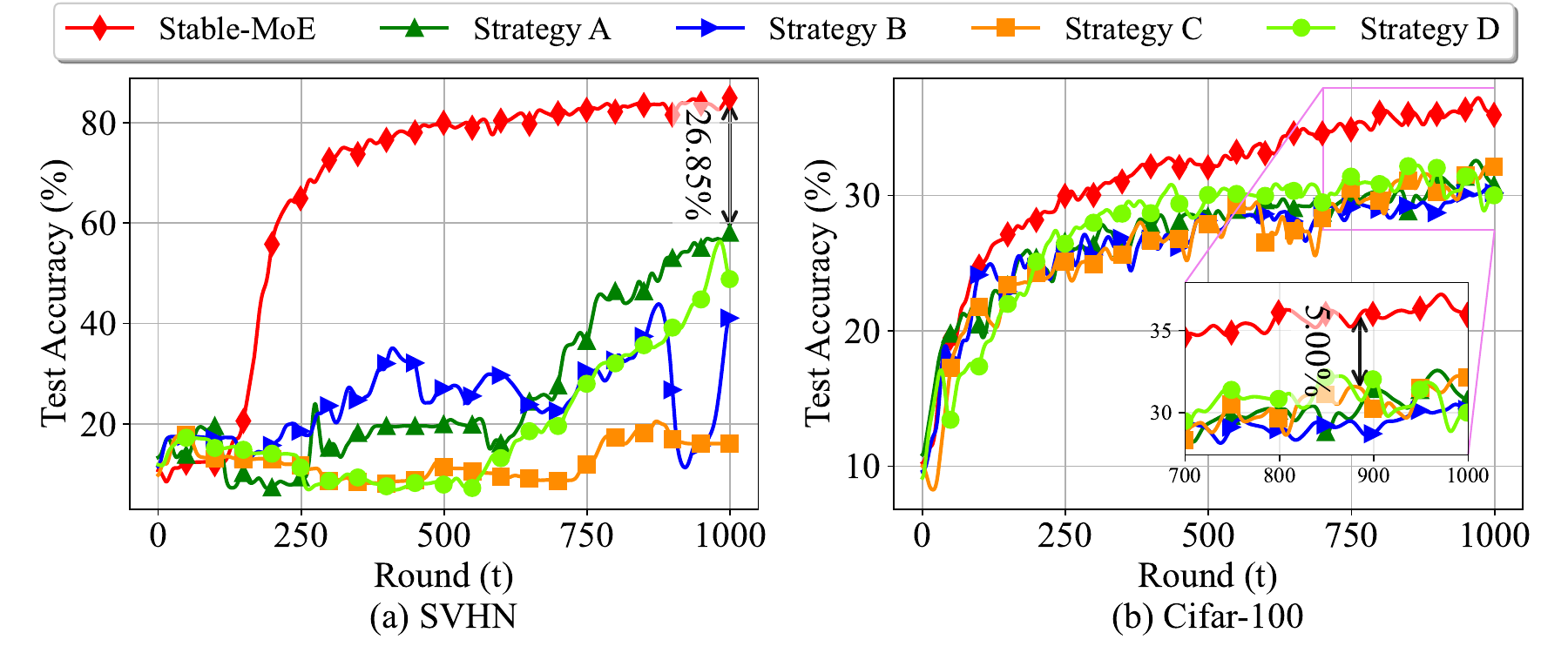}
% \vspace{-20pt}
\caption{Accuracy comparison between Stable-MoE and Strategies A-D.}
\label{fig:accuracy}
\end{figure}

Fig. \ref{fig:accuracy} presents the test accuracy for Stable-MoE and baselines. 
In Fig. \ref{fig:accuracy}(a), the accuracy of Stable-MoE rapidly converges to $80\%$, which significantly outperforms all the baselines. 
In contrast, the baselines either converge slowly or remain unstable with persistently low accuracy. 
% Moreover, Strategy B exhibits the poor and unstable accuracy, since it ignores the diverse computation capacities of edge servers. 
In Fig. \ref{fig:accuracy}(b), it is observed that Stable-MoE gains the best accuracy. 
% It is observed that LLBMoE consistently achieves the highest accuracy and converges the fastest among all methods. The rapid convergence and superior accuracy highlight the effectiveness of LLBMoE in distributed training. In contrast, Strategies A, C, and D achieve moderate accuracy levels but with slower convergence and higher variance. 
This result further demonstrates the high efficiency of Stable-MoE in resource-heterogeneous edge networks. 

\section{Conclusion}
We have proposed a novel Lyapunov-based token routing framework for distributed MoE systems over resource-heterogeneous edge networks to dynamically optimize token routing and computation frequency allocation.
Given stochastic token arrivals, Stable-MoE ensures the long-term stability of both token and energy queues under diverse computational capabilities of distributed edge servers. 
Experimental results have demonstrated that Stable-MoE achieves superior throughput compared with baseline strategies while maintaining queue stability. 
Overall, this paper serves as a first foray into the dynamic token routing issue of resource-heterogeneous MoE from the perspective of Lyapunov optimization. 
Going forward, future directions include the designs of dynamic routing strategies under sophisticated model architecture and over resource-limited wireless networks.

\bibliographystyle{IEEEtran}
\bibliography{references}  % references.bib文件

% \bibitem{ref1}
% {\it{Mathematics Into Type}}. American Mathematical Society. [Online]. Available: https://www.ams.org/arc/styleguide/mit-2.pdf

% \bibitem{ref2}
% T. W. Chaundy, P. R. Barrett and C. Batey, {\it{The Printing of Mathematics}}. London, U.K., Oxford Univ. Press, 1954.

% \bibitem{ref3}
% F. Mittelbach and M. Goossens, {\it{The \LaTeX Companion}}, 2nd ed. Boston, MA, USA: Pearson, 2004.

% \bibitem{ref4}
% G. Gr\"atzer, {\it{More Math Into LaTeX}}, New York, NY, USA: Springer, 2007.

% \bibitem{ref5}M. Letourneau and J. W. Sharp, {\it{AMS-StyleGuide-online.pdf,}} American Mathematical Society, Providence, RI, USA, [Online]. Available: http://www.ams.org/arc/styleguide/index.html

% \bibitem{ref6}
% H. Sira-Ramirez, ``On the sliding mode control of nonlinear systems,'' \textit{Syst. Control Lett.}, vol. 19, pp. 303--312, 1992.

% \bibitem{ref7}
% A. Levant, ``Exact differentiation of signals with unbounded higher derivatives,''  in \textit{Proc. 45th IEEE Conf. Decis.
% Control}, San Diego, CA, USA, 2006, pp. 5585--5590. DOI: 10.1109/CDC.2006.377165.

% \bibitem{ref8}
% M. Fliess, C. Join, and H. Sira-Ramirez, ``Non-linear estimation is easy,'' \textit{Int. J. Model., Ident. Control}, vol. 4, no. 1, pp. 12--27, 2008.

% \bibitem{ref9}
% R. Ortega, A. Astolfi, G. Bastin, and H. Rodriguez, ``Stabilization of food-chain systems using a port-controlled Hamiltonian description,'' in \textit{Proc. Amer. Control Conf.}, Chicago, IL, USA,
% 2000, pp. 2245--2249.

% \end{thebibliography}

\vfill

\end{document}